\documentclass{article}
\usepackage{graphicx} 

\usepackage[utf8]{inputenc}
\usepackage[english]{babel}
\usepackage{mathtools}
\usepackage{nccmath}
\usepackage{amsmath,amssymb}
\usepackage{graphicx} 
\usepackage[utf8]{inputenc}
\usepackage[english]{babel}
\usepackage{mathtools}
\usepackage{nccmath}
\usepackage{amsmath,amssymb}
\usepackage{graphicx}
\usepackage{subcaption}
\usepackage[margin=1in]{geometry}
\usepackage{natbib}
\usepackage{float}
\usepackage{comment}
\usepackage{soul}
\usepackage{authblk}
\usepackage{url}
\usepackage{hyperref}

\usepackage{appendix}
\usepackage{amsthm}
\usepackage{xcolor}
\newtheorem{theorem}{Theorem}[section]

\title{Pulse desynchronization of neural populations by targeting the centroid of the limit cycle in phase space}

\author[1,2,3]{Ram\'on Guevara}
\author[3]{Marco Zenari}
\author[4]{Giorgio Nicoletti}
\author[5]{Elisa Marini}
\author[1,3,6]{Samir Suweis}
\author[1,6]{Sandro Azaele}
\author[3,7]{Marco Formentin}
\affil[1]{Department of Physics and Astronomy, University of Padua, Padua, Italy}
\affil[2]{Department of Developmental and Social Psychology, University of Padua, Padua, Italy}
\affil[3]{Padua Neuroscience Center, Padua, Italy}
\affil[4]{Abdus Salam International Center for Theoretical Physics, Trieste, Italy}
\affil[5]{CEREMADE, UMR CNRS 7534, Université Paris Dauphine-PSL, Paris, France}
\affil[6] {INFN, Padua section, Padua, Italy}
\affil[7]{Department of Mathematics, University of Padua, Padua, Italy}

\begin{document}
\maketitle

\begin{abstract}
The synchronized activity of neuronal populations can lead to pathological over-synchronization in conditions such as epilepsy and Parkinson disease. Such states can be desynchronized by brief electrical pulses. But when the underlying oscillating system is not known, as in most practical applications, to determine the specific times and intensities of pulses used for desynchronizaton is a difficult inverse problem. Here we propose a desynchronization scheme for neuronal models of bi-variate neural activity, with possible applications in the medical setting. Our main argument is the existence of a peculiar point in the phase space of the system, the centroid, that is both easy to calculate and robust under changes in the coupling constant. This important target point can be used in a control procedure because it lies in the region of minimal return times of the system.        
\end{abstract}

\section{Introduction}

The adjustment of rhythms of self-sustained or weakly chaotic oscillators due to weak coupling is a widespread phenomenon in nature, with examples ranging from physics to chemistry, biology (e.g., pacemaker cardiac cells, Josephson junctions, lasers, mechanical vibrations, electronic circuits, metronomes, fireflies) (\cite{pikovsky2002synchronization}). Understanding network synchronization is particularly relevant for neuroscience (\cite{ermentrout2010mathematical}) because the collective coordinated (synchronized) activity of neuronal populations is at the origin of neural rhythms (\cite{buzsaki2006rhythms}). Furthermore, neural synchronization is believed to play a fundamental role in brain functionality, as a mechanism for information integration in perceptual and cognitive processes (\cite{buzsaki2006rhythms, schnitzler2005normal, varela2001brainweb}). 

The malfunction of synchronization mechanisms may also affect brain functionality. It is well known that excessive neural synchronization is often associated with pathological activity, particularly in epilepsy, Parkinson's disease, and essential tremor (\cite{jiruska2013synchronization, park2011neural, batista2010delayed, milton2003epilepsy}). For example, during an epileptic seizure, the activity of neuronal networks is highly synchronized across brain areas, leading to a well-recognized high amplitude and fairly rhythmic electroencephalographic (EEG) signal. For this reason, the last few decades have witnessed increased interest in the monitoring and control of neuronal synchronization, with the ultimate goal of suppressing pathological over-synchronized brain states.
Recent technological advances have provided a way to alleviate such pathological conditions by low-intensity electric stimulation of the affected brain regions or peripheral nerves. This is particularly useful for patients who do not respond to drug medications. In this sense, one of the most important breakthroughs was the development of deep brain stimulation (DBS), a technique to suppress synchronized activity in epilepsy and Parkinson disease (\cite{benabid1991long, kringelbach2007translational,kuhn2017innovations}). In DBS, micro-electrodes implanted in the brain locally stimulate neuronal tissue by application of high frequency electric pulses (around 120 Hz), effectively modulating neuronal dynamics. The mechanism of action of DBS is not yet fully understood (\cite{johnson2008mechanisms, deniau2010deep}). However, it has been hypothesized that it suppresses pathological oscillatory activity through the desynchronization of neuronal populations, without suppressing the oscillatory activity of individual neurons (\cite{tass2007phase}).
Due to the negative effects that injected currents can induce in patients (such as speech impairment and involuntary muscle contraction) (\cite{okun2010parkinson,skodda2012effect}), the current goal of DBS and similar techniques is to achieve a minimally invasive stimulating signal, that is, to achieve desynchronization of neuronal populations with a minimal number of injected pulses. 
 
Two fundamental approaches to achieve desynchronization of collective neural oscillations have been proposed in the literature: 1) those based on phase resetting techniques (\cite{tass2001effective, tass2003model,tass2007phase, hauptmann2009cumulative,lysyansky2011desynchronizing}) and the injection of small pulses of current; 2) Those originating from the control theory community, using continuous close-loop feedback stimulation (\cite{pikovsky2004controlling, rosenblum2004delayed, montaseri2013synchrony}). The latter approach, based on the idea of using a closed loop feedback control of synchronization, has acquired great credibility due to its effectiveness in theoretical models and simulations. In fact, it offers the possibility of implementing a vanishingly small stimulation signal, a crucial feature in medical applications (as we mentioned above, DBS can lead to collateral damages, so it is important to minimize the stimulating signal) (\cite{tukhlina2007feedback,montaseri2013synchrony}).
In the medical setting, the closed loop feedback stimulation concept is also gaining credibility and has been implemented with relative success (\cite{rosin2011closed,little2013adaptive}).
In more recent years, the idea of a desynchronization of neuronal populations by closed-loop feedback using stimulating pulses instead of continuous signals (that is, a combination of both approaches) was developed and applied to neuronal models, showing that it is possible to desynchronize the output activity of the models by very small-amplitude pulses (\cite{rosenblum2020controlling,krylov2020reinforcement}). 
In the current study, we follow a similar approach, that of synchronization control through pulse stimulation, but with two main differences from previous work (\cite{rosenblum2020controlling,krylov2020reinforcement}). 

The first difference has to do with the input signal (input to the feedback-control system). That is, what kind of neurophysiological signal does our device measure? In previous literature, the input is univariate, but we consider the case where a bi-variate signal is measured. Indeed, it is typically assumed that the input signal is the local field potential (average voltage of the neuronal population, a univariate time series) or a related signal, as is customary in electrophysiology. Instead, here we assume that a bi-variate time series is measured, which is justified by  recent advances in neurophsyological and multimodal measurements, allowing for bi-variate, and more generally multivariate recordings. For example, at the level of local circuits, it is possible to simultaneously record neuronal spikes and calcium imaging in vivo (\cite{wu2021modified}), an example that would constitute a bi-variate input signal for a control device. Recent works have also shown that it is possible to measure excitation and inhibition (that is, a bi-variate signal) simultaneously, on the basis of electrophysiological recordings (\cite{muller2021novel,bruining2020measurement}).
Another example of a multivariate signal is the simultaneous measurement of brain activity using EEG and MEG (although it is worth remembering that EEG and magnetoencephalography (MEG) signals are only partially independent) (\cite{malmivuo2012comparison}).  
We highlight that, at the whole brain scale, especially when such non-invasive techniques (scalp EEG or MEG) are used, oscillatory activity is described in terms of phenomenological models that do not make explicit use of individual neurons (\cite{glomb2021computational}). In this direction, Wilson-Cowan population models have been extensively used to model medium and large-scale brain activity in terms of excitation and inhibition (\cite{destexhe2009wilson, byrne2020next}). Such models have been successfully used to describe epileptic seizures (\cite{meijer2015modeling, lytton2008computer}). 

Second, the control problem is addressed in previous work through successive approximations; that is, by an adaptive search for the most effective phase for pulse stimulation (\cite{rosenblum2020controlling}). This strategy requires applying many pulses before the population signal is sufficiently small. Here, we exploit the fact that the input signal is bi-variate to extract information about the phase space and other dynamic and geometric properties of the underlying dynamical system, without completely fitting it to the data. Knowledge of such properties gives us the opportunity to achieve synchronization control with a much lower number of pulses, a feature that could be useful in the medical setting, minimizing the invasiveness of the method and collateral damage. 

\section{Theoretical Framework}

\subsection{Modeling pathological neural activity}

The neural activity observed in certain types of pathologies (such as epileptic seizures and Parkinson's tremor) is strongly periodic and synchronized. As explained above, we assume that the measuring device records a bi-variate $(x,y)$ signal. To maintain some level of generality, we do not specify here the nature of this signal (as already mentioned, it could be a combination of two univariate signals obtained with different techniques). 
To model such pathological neural activity, we consider a FitzHugh-Nagumo (also known as Bonhoeffer - van der Pol) system of $N$ oscillators (units) coupled through the mean field $X =  \frac1N\sum_{j=1}^N x_j$:  
\begin{equation}\label{eq:FitzHugh-Nagumo}
    \begin{split}
        \frac{dx_i}{dt} & = \xi x_i + \delta x_i^3 + \nu y_i + \epsilon X   + I_i\\
        \frac{dy_i}{dt} & = \alpha \left( x_i + \beta y_i + \gamma \right)
    \end{split}
\end{equation}
where $x_i (t)$ and $y_i (t)$ are the time-dependent activities of each interacting oscillating unit and $\delta,\xi\,\,\nu,\, \epsilon,\, \alpha,\,\beta$ and $\gamma$ are parameters. For simplicity and following the literature (\cite{pikovsky2004controlling,rosenblum2020controlling,krylov2020reinforcement}), we also assume that the interaction of an individual oscillator with the mean field generated by the other  units occurs only in the $x$-dimension (see \cite{pikovsky2004controlling, rosenblum2020controlling}), through the term $\epsilon X$, where $\epsilon$ is a (weak)  coupling parameter. 

The $x$-variable is also coupled with a current, $I_i$, different for each oscillator and sampled from a Gaussian distribution with mean $\mu$ and variance $\sigma^2$.
We highlight three main points about the proposed modelling framework.
1) The oscillators do not necessarily represent individual neurons; rather, they can model the activity of a large ensemble of functionally similar neurons (e.g. resting state networks). The collective neural oscillation emerges as a result of the interaction between these coupled oscillators. 2) The interaction between oscillators is mathematically described in a very compact form: each individual oscillator is coupled to others through the ensemble average (mean field, $X$). The coupling of each oscillator to the average is assumed to be weak (this assumption is typically satisfied for neuronal networks (\cite{hoppensteadt1997weakly})). 3) The coupling (interaction) term depends only on one variable. 

Rather than aiming for biological realism or an exact match to known neuronal oscillations, we focus on the fact that each oscillator has an associated limit-cycle, a property that we exploit in the control strategy, as explained below. In Eq.~\eqref{eq:FitzHugh-Nagumo}, the emergence of synchronized oscillations is possible within a certain range of parameters (\cite{kuramoto_chemical_2003, pikovsky2002synchronization}). For concreteness and visualization, let us consider the following choice of parameters: $\xi=-\nu=1$, $\delta = -1/3$, $\alpha=0.1$, $\beta=-0.8$, $\gamma = 0.7$, $\mu=0.6$ and $\sigma^2 = 0.01$.  In this case, as shown in Fig.~\ref{fig:FitzNagumo}, each oscillator displays a limit cycle and oscillates periodically at a certain frequency. The average (mean-field) activity, $X(t)$, is also a periodic signal with a period similar to the period of each individual oscillator. The currents $I_i$ are all different; therefore, there are slight differences in the shape of the limit cycle and the frequency of each oscillator, with the parameter $\sigma$ (standard deviation of $I$) characterizing the shapes of the cycles and the spread in the frequency of oscillation of the population.
With our choice of parameters, Eq.~\eqref{eq:FitzHugh-Nagumo} represents a good phenomenological model of the emergence of the macroscopic neural oscillations observed during pathological conditions. 

The parameter $\epsilon$ serves as a global coupling; increasing it leads to the synchronization of the population. Therefore, it can model the onset of the pathology (for large enough $\epsilon$). In Fig.~\ref{fig:FitzNagumo}, we show simulations of the system with the choice of parameters given above, and with $\epsilon$ running from 0 to 0.5. At values of $\epsilon$ very close to 0, the population is desynchronized, with each individual unit having a different phase and a slightly different frequency. Increasing $\epsilon$ leads to a decrease in the population spread in phase space due to synchronization. 

\subsection{Feedback signal and pulse-control strategy}

In a pulse feedback control paradigm, the neural system is located inside a closed-loop: a physiological signal recorded from the neural system is constantly measured and used as input signal for the controlling apparatus that operates on it, in real time. The output signal of the controlling apparatus is in turn used to modulate the timing and intensity of the pulses injected back to the neural system.   
The main difference with the previous literature, at this point, is that we assume knowledge of both $X = \frac{1}{N}\sum_{j=1}^N x_j$ and $ Y = \frac{1}{N}\sum_{j=1}^N y_j$, that is, we assume that the recording of neural activity is a bi-variate time series (whereas in previous publications it was assumed knowledge of $X$ only, see e.g. (\cite{pikovsky2004controlling,rosenblum2020controlling})). In other words, the inputs for the feedback loop are $X(t)$ and $Y(t)$. This signal is assumed to be monitored at all times. To mimic experimental conditions, we also assume that we have no knowledge of the dynamics of individual units. In summary, we know $X$ and $Y$ at all times, but we do not know the individual  $x_i$ and $y_i$ values, nor the exact form of the Eq.~\eqref{eq:FitzHugh-Nagumo}.

As for the feedback pulse desynchronization control strategy, we suggest a model in which pulses are given to all the population elements simultaneously and with equal intensity. We propose the following simplification of the effect of a pulse of current on the system: after the pulse, the population averages $X$ and $Y$ are changed; that is the effect of the pulse. In other words, a pulse is equivalent to resetting the system, after which the system is restarted at new initial conditions (this simulates the \lq\lq kick\rq\rq\, on the system due to the current pulse, a discontinuous process). Conceiving the dynamics of the neural system as movement in the ($x$,$y$) plane, the pulse effect is that of arbitrarily changing the system averages ($X$,$Y$) to a new location in that plane. We do not provide pulses with different intensities to each individual unit, but to the whole population average. This is because we assume that we cannot monitor or control individual units, so we model the effect of the pulse on the average, making the approximation that the effect of the pulse is the same for all units.  We also assume that we have the ability to give pulses that change both the $x$ and the $y$ coordinates. So, if the pulse of intensity ($A$,$B$) is given at time $t_p$, then we stop the simulation at this time and establish new initial conditions by pushing the system instantaneously to new coordinates: $x_i(t_p) \rightarrow x_i(t_p) + A$ and $y_i(t_p) \rightarrow  y_i(t_p) + B$.

What is the strategy followed in our pulse feedback-loop paradigm? When are pulses given and with what strength? Our logic is that it is convenient to push the average's system (X and Y variables) into specific points or regions of its phase space. To provide an example from the existing literature, in (\cite{rosenblum2020controlling}), the pulses are presented in such a way as to place the average system in the vicinity of an instability region (which is identified by making $\epsilon = 0$ in Eq.~\eqref{eq:FitzHugh-Nagumo}). Hence, in comparison with that work, our task is facilitated by the fact that we assume knowledge of both $X$ and $Y$ (whereas in (\cite{rosenblum2020controlling}) knowledge of $X$ alone is assumed).

\begin{figure}
    \centering
    \includegraphics[width=0.9\linewidth]{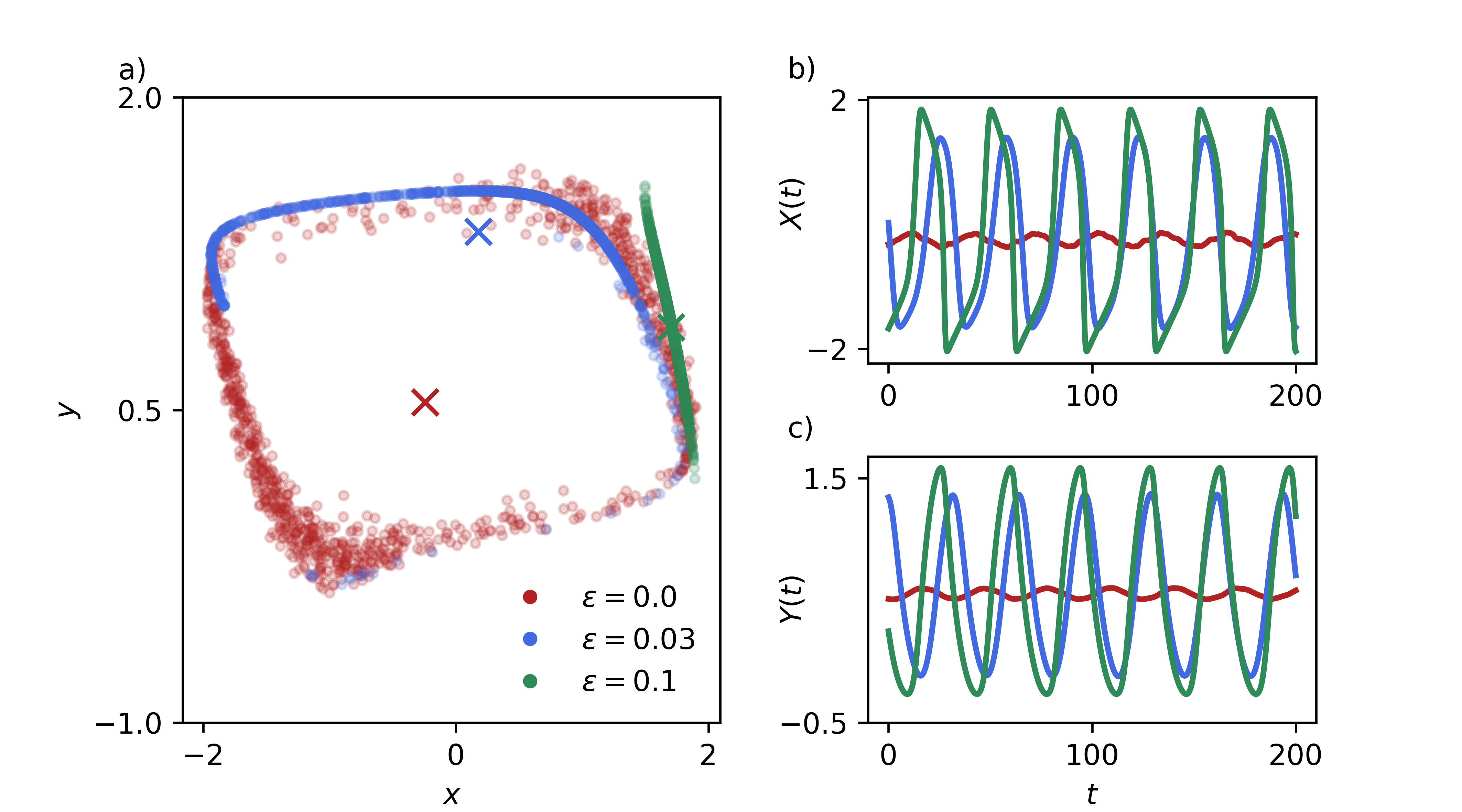}
    \caption{\textbf{FitzHugh--Nagumo (Bonhoeffer--van der Pol) system.} a) Example configurations of $N = 1000$ oscillators in the $(x,y)$ phase space at a fixed time, shown for three different values of the coupling $\varepsilon$. All other parameters are fixed to
$\xi = 1,\ \nu = -1,\ \delta = -\tfrac{1}{3},\ \mu = 0.6,\ \alpha = 0.1,\ \beta = 0.8,\ \gamma = 0.7,\ \sigma = 0.1.$
Crosses indicate the position of the ensemble mean $(X, Y)$ averaged in time. 
b) Temporal evolution of the ensemble mean $X(t)$.
c) Temporal evolution of the ensemble mean $Y(t)$.
 }
    \label{fig:FitzNagumo}
\end{figure}

\section{Results}

\subsection{Pulse control on the basis of the centroid of the limit cycle in phase space}

Numerical simulations of Eq.~\eqref{eq:FitzHugh-Nagumo} indicate that, in the phase space $(x,y)$, a single reference point can be used to desynchronize the population. We take this point to be the centroid (geometric center) of the oscillators' limit cycles, which are nearly coincident in phase space, and define it as follows.

First, consider the limit cycle of an individual oscillator simply as a geometric object in the 2D phase space (a closed curve; see Fig.~\ref{fig:FitzNagumo}). The \,\emph{centroid} of this curve is the point $(x_c,y_c)$ we use for control. Geometrically, it is the same point you would obtain as the balance point of a closed wire of uniform density shaped like the limit cycle (equivalently, of the region enclosed by the cycle filled with uniform density, depending on the chosen convention). This should not be confused with the time average of the trajectory, i.e., formally,  $\lim_{t\to +\infty}\left( \frac{1}{t}\int_0^t x_i(s) ds, \frac{1}{t}\int_0^t y_i(s) ds\right)$: because the oscillator moves with non-uniform speed along the cycle, the trajectory spends more time in some segments than in others, so the time-averaged point generally differs from the geometric centroid. 
In Fig.~\ref{fig:FitzNagumo}, another curve important for our argument is also shown: the set of points $(X(t),Y(t))$, that is, the mean-field equivalent to the limit cycles. We call these curves \lq\lq mean-field cycles\rq\rq.
We highlight that the centroids of limit cycles corresponding to different values of $\epsilon$ are almost coincident (they are nearly indistinguishable at the scale used in Fig.~\ref{fig:FitzNagumo}). Therefore, the centroid is almost independent of the value of $\epsilon$. Furthermore, the centroids lie within the set of possible mean-field cycles; that is, even for very small values of $\epsilon$, the mean-field cycles evolve around the centroids. 

These two properties can be used for pulse control. In applications, the value of $\epsilon$ is not known, so it is useful that the centroid is almost independent of $\epsilon$. The importance of this fact is that once the centroid of the data (at a given $\epsilon$) is found, the centroids at any other values of $\epsilon$ are very close to the one found. It is also useful that the centroid lies within the mean-field cycles for all values of $\epsilon$. The implication is that, pushing the system in the centroid, it is automatically very close to the mean-field cycle at $\epsilon = 0$. This can be seen in Fig.~\ref{fig:FitzNagumo}: if the system is pushed to the centroid position, then it is very close to the centroid at $\epsilon = 0$ and inside the mean-field cycle curve for the smallest value of $\epsilon$ used.

\subsection{Return times to the limit cycle}

To assess the effectiveness of the pulse feedback control procedure, we investigated how long it takes for the system to return to the mean-field limit cycle after being perturbed by a pulse. The reasoning is as follows. When the pulse control is applied to a population already on the limit cycle, each oscillator is displaced in phase space to a new point. This displacement is rigid, meaning that all individual oscillators are shifted equally in phase space. The main question is, therefore, how long it takes for a perturbed population to return to the limit cycle. In particular, our goal is to identify points in phase space where the return time is maximal. Longer return times indicate a more effective control procedure.
We proceed in two steps. First, we numerically explore the return times of the system to the limit cycle. 
Subsequently, we exploit the thermodynamic limit of system \eqref{eq:FitzHugh-Nagumo} and a few approximations derived from the simulations to analytically characterize the effectiveness of the proposed control strategy.

\subsubsection{Numerical investigation of pulse control strategy}

To numerically explore the effectiveness of the proposed pulse feedback control strategy, and in particular the choice of the control point, we performed extensive simulations to estimate the return time of the system to the mean-field limit cycle after a control pulse perturbation. These simulations are intended as an exploratory investigation, providing insights into the behavior of the system and guiding further analytical considerations.

In the simulations, we consider a system of $N=1000$ oscillators of the form given in Eq.~\eqref{eq:FitzHugh-Nagumo} with parameters $\xi = 1$, $\nu = -1$, $\delta = -1/3$, $\mu = 0.6$, $\alpha = 0.1$, $\beta = 0.8$, $\gamma = 0.7$, $\sigma = 0.1$, and different values of $\epsilon \in \{0.1, 0.2, 0.3, 0.4\}$.  

The pulse perturbation is implemented as a translation in phase space, shifting the population average $(X,Y)$ from a point on the limit cycle, where the system is synchronized, to another point within the limit cycle. This new point serves as the initial condition for the subsequent simulation, during which the system relaxes back to the limit cycle. The spatial distribution $(x_i, y_i)$ around the mean $(X, Y)$ is also slightly perturbed by Gaussian noise with standard deviation $\sigma = 0.1$.

After the perturbation, the system evolves freely according to the dynamical Eq.~\eqref{eq:FitzHugh-Nagumo}, and the first return time to the limit cycle is recorded. A trajectory $(X,Y)$ is considered to have returned to the mean-field limit cycle if its distance to the cycle is less than a threshold $dr$, chosen empirically as $dr = 0.03$.  
 
\begin{figure}[ht]
    \centering

    \begin{subfigure}[b]{0.45\textwidth}
        \centering
        \includegraphics[width=\textwidth]{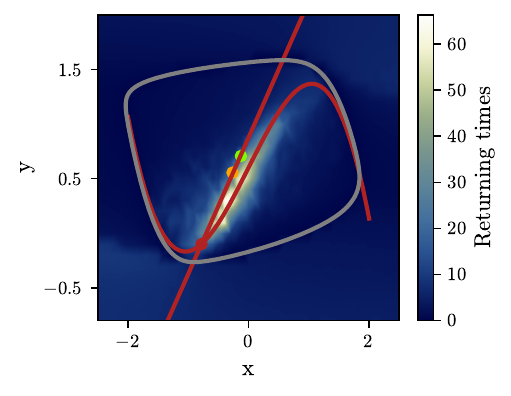}
        \caption{$\varepsilon = 0.1$}
    \end{subfigure}
    \hfill
    \begin{subfigure}[b]{0.45\textwidth}
        \centering
        \includegraphics[width=\textwidth]{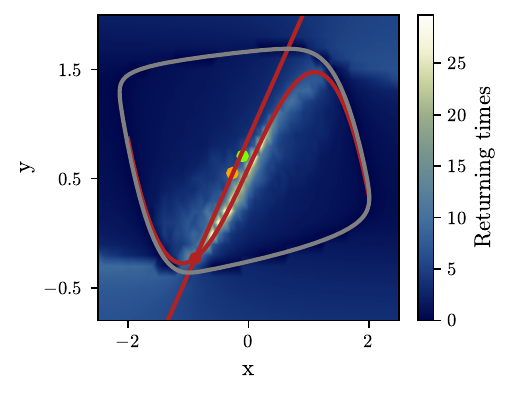}
        \caption{$\varepsilon = 0.2$}
    \end{subfigure}

    \par\medskip

    \begin{subfigure}[b]{0.45\textwidth}
        \centering
        \includegraphics[width=\textwidth]{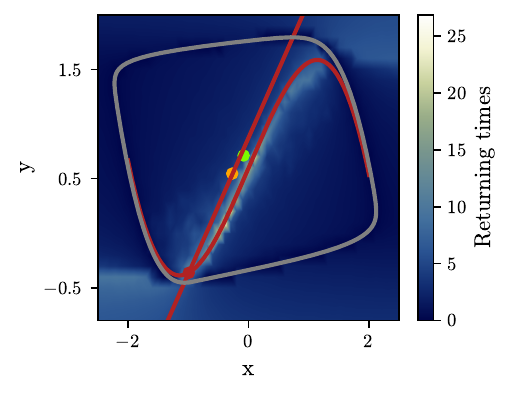}
        \caption{$\varepsilon = 0.3$}
    \end{subfigure}
    \hfill
    \begin{subfigure}[b]{0.45\textwidth}
        \centering
        \includegraphics[width=\textwidth]{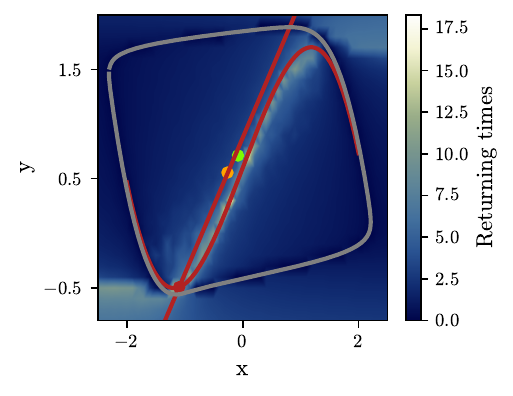}
        \caption{$\varepsilon = 0.4$}
    \end{subfigure}

    \caption{
    \textbf{Interpolated heatmap of return times.} 
    The mean-field limit cycle is shown as a solid gray line; nullclines as solid red lines; the unstable fixed point as a red marker.
    The geometric center of the mean-field limit cycle is shown in green, while the averaged-in-time $(X,Y)$ position for $\epsilon = 0$ is shown in orange. All geometric curves and points are derived from the averages of $X$ and $Y$. Simulations were performed with $N=1000$ oscillators and $1479$ perturbation points on a grid with spacing $\mathrm{d}x=\mathrm{d}y=0.1$.
    Return times were linearly interpolated within the grid.
    For each perturbation point, $100$ trials were run for a simulation time of $100$ with timestep $dt=0.05$.}
    
    \label{fig:return_times}
\end{figure}

To investigate which points in phase space are most relevant for control, we constructed a square grid inside the limit-cycle region with spacing $\mathrm{dx} = 0.1$, $\mathrm{dy} = 0.1$. Each grid node was used as a starting point for a simulation, mimicking a pulse that displaces the system inside the limit cycle and then allows it to evolve back. For each point, the corresponding return time was recorded. The results are shown in Fig.~\ref{fig:return_times}, where the return time is encoded in color at each grid point. The figure also shows relevant geometric features, including nullclines, the unstable fixed point, the geometric center of the limit cycle, and the averaged-in-time $(X, Y)$ position for $\epsilon = 0$ at stationarity. 

Several insights can be gained from these exploratory simulations. 

First, the return times are largest (yellow regions) when the perturbation occurs near the $\frac{dx}{dt}=0$ nullcline (red hyperbolic curve). This can be attributed to the system having two characteristic times: fast motion along the upper and lower branches of the mean-field limit cycle, and slow motion along the right and left branches. Starting at a point with zero initial $x$ velocity delays the return to the limit cycle, which primarily occurs along the $x$ direction. The region of maximal return times intersects the $\frac{dy}{dt}=0$ nullcline only near the center of the figure.  

Second, both the geometric center and the averaged-in-time position of $(X,Y)$ at $\epsilon = 0$ are close to the region of longest return times, with the $x$-nullcline lying nearby. The unstable fixed point, located at the intersection of both nullclines, corresponds to a region of minimal return times.  

Third, as $\epsilon$ increases, the region of short return times narrows around the $x$-nullcline.

In conclusion, numerical simulations show that the centroid is a feasible point for controlling synchronization because it is inside the region of minimal return times.

\subsubsection{Thermodynamic limit and analytical estimation of return times}

In the previous section, the return times after pulse stimulation were investigated numerically. It was found that targeting the centroid for synchronization control is a good strategy because the centroid is in a region of large return times. 

In the current section, we describe an analytical argument in support of this idea.

The main idea is to investigate the dynamics of interacting FitzHugh-Nagumo oscillators (described by Eq.~\eqref{eq:FitzHugh-Nagumo}) in the thermodynamic limit, when the number of oscillators becomes very large ($N\to +\infty$).  
Indeed, system \eqref{eq:FitzHugh-Nagumo} has the propagation of chaos property (\cite{touboul12, touboul2014propagation, reviewPoCI, nodea}): in the limit $N\to +\infty$, the oscillators  effectively behave as independent units, all obeying the  equation

\begin{equation}\label{eq:FitzHugh-Nagumo_Mean_Field}
    \begin{split}
        \frac{d \hat x}{dt} & = \xi \hat x + \delta \hat{x}^3 + \nu \hat y + \epsilon \mathbb{E}[\hat x] + \mu + \sigma I,\\
        \frac{d \hat y}{dt} & = \alpha \left( \hat x + \beta \hat y + \gamma \right),
    \end{split}
\end{equation}
where $I$ is a standard Gaussian random variable with zero mean and unit variance ($I \sim \mathcal{N}(0,1)$). This term represents the effect of the random currents $I_i$ in the original oscillator population (quenched noise). The expectation $\mathbb{E}[\cdot]$ is taken with respect to the joint law of $(\hat x, \hat y)$ at a given time $t$. The rigorous statement and proof of the propagation of chaos property are postponed to Appendix \ref{app}.

In this thermodynamic limit, the mean-field equations \eqref{eq:FitzHugh-Nagumo_Mean_Field} allow us to analyze the macroscopic dynamics of the system analytically and to explore the influence of control pulses on the collective behavior of the oscillators without simulating the full high-dimensional population. 
Indeed, the propagation of chaos amounts to saying that a law of large numbers holds for $(X(t), Y(t))$, which means in particular that $X(t)$ is close to $\mathbb E[\hat x(t)]$ in any finite time interval.  
Moreover, in the zero-noise limit, the quantity $\mathbb E[\hat x^3(t)]$, which appears in the equation of $\mathbb E[\hat x (t)]$ obtained by taking the mean on both sides of Eq.~\eqref{eq:FitzHugh-Nagumo_Mean_Field},  can be replaced by $(\mathbb E[\hat x(t)])^3$. 
Furthermore, from the numerical simulations described above, we observe that the return trajectories following a perturbation tend to evolve approximately parallel to the horizontal segment of the limit cycle, as illustrated in Fig.~\ref{fig:return_times2}a. Motivated by this observation, we assume that the returning trajectories are approximately horizontal; that is, $y(t) = Y_0$, and consequently $\frac{dy}{dt} = 0$ along these paths.

Summarizing the above considerations, starting from Eq.~\eqref{eq:FitzHugh-Nagumo_Mean_Field} and considering $\sigma = 0$ as the average input current, the equation approximating the mean return trajectory $X(t)$ simplifies to

\begin{equation}
\frac{dX}{dt} = (\xi + \epsilon)X + \delta X^3 + \nu Y_0 + \mu.
\end{equation}

The corresponding return time is then obtained by integration:
\begin{equation} \label{eq:return_time_integral}
    T_r = \int_{X_0}^{X_f} 
    \frac{1}{(\xi + \epsilon)X + \delta X^3 + \nu Y_0 + \mu} \, \mathrm{d}X,
\end{equation}
which represents the quantity to be maximized with respect to $X_0$ and $Y_0$.

We note that, once $Y_0$ is fixed, the maximum return time is achieved when the denominator of the integrand in Eq.~\eqref{eq:return_time_integral} is minimized. This occurs near the nullcline defined by $\dot{X} = 0$.

To study this, we consider the cubic function
\begin{equation}
    f(X) = \delta X^3 + aX + b,
    \label{eq:f_def}
\end{equation}
where $a = \xi + \epsilon$ and $b = \nu Y_0 + \mu$. The corresponding cubic equation
\begin{equation}
    f(X) = 0 \quad \Longleftrightarrow \quad X^3 + pX + q = 0,
    \label{eq:cubic_standard_form}
\end{equation}
is written in standard form by defining
\[
p = \frac{a}{\delta}, 
\qquad 
q = \frac{b}{\delta}.
\]
The discriminant of this cubic equation is
\[
\Delta = -4p^3 - 27q^2.
\]
In the parameter regime considered here, we find $\Delta > 0$ (see Fig.~\ref{fig:return_times2}), indicating that Eq.~\eqref{eq:cubic_standard_form} admits three distinct real roots $r_k$, with $k = 1, 2, 3$.

We can then express the reciprocal of $f(X)$ via partial fraction decomposition as
\begin{align}
    \frac{1}{f(X)} 
    &= \sum_{k=1}^3 \frac{A_k}{X - r_k}, \\
    A_k 
    &= \frac{1}{f'(r_k)} 
     = \frac{1}{3\delta r_k^2 + a}.
    \label{eq:Ak_def}
\end{align}

Integrating term by term, the return time in Eq.~\ref{eq:return_time_integral} becomes
\begin{equation}
    T_r 
    = \int_{X_0}^{X_f} \frac{1}{f(X)} \, \mathrm{d}X
    = \sum_{k=1}^3 
      \frac{1}{f'(r_k)}
      \ln \left| \frac{X_f - r_k}{X_0 - r_k} \right|.
    \label{eq:return_time_roots}
\end{equation}

To estimate the return time from a given perturbation point $(X_0, Y_0)$, one can compute the roots of Eq.~\eqref{eq:f_def} and substitute them into Eq.~\eqref{eq:return_time_roots}, setting $X_f$ to the value of the limit cycle corresponding to $Y = Y_0$. 

The results of the estimation of the return times from Eq.~\eqref{eq:return_time_roots} are shown in Fig.~\ref{fig:return_times2} and can be compared with the numerical results (simulations) shown in Fig. \ref{fig:return_times2}d. It can be observed that, overall, the analytical results are in good agreement with the simulations, showing, in particular, that the region of maximal return times is around the $\frac{dx}{dt}=0$ nullcline. This result can also be derived from Eq.~\eqref{eq:return_time_integral}. Indeed, as it is clear from this equation, in order to maximize the return time, the denominator of the integrand function has to be minimal. This corresponds to perturbing the system towards a point $(X_0, Y_0)$ on the nullcline $\frac{dx}{dt}=0$. It is important to note that the argument just exposed is independent of the values given to the parameters of the system (as long as those parameters are within a region where the approximation used above is valid, that is, trajectories are approximately parallel to the $x$-axis). In other words, this argument is almost independent of the type of equation used, as long as it has a limit-cycle and almost horizontal trajectories.

\begin{figure}[H]
    \centering
    \includegraphics[width=1\linewidth]{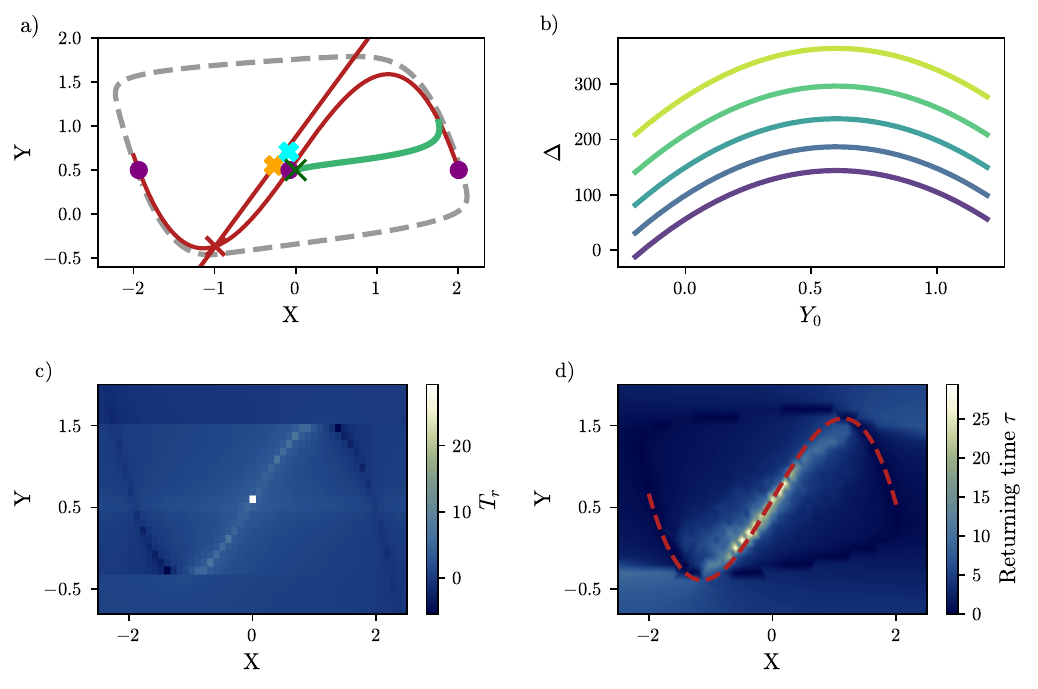}
    \caption{\textbf{Estimation of the return times from Eq.~\eqref{eq:return_time_roots}}.
    a) Limit cycle of $(X,Y)$ and key points: trajectory of the limit cycle (dashed gray) with the nullclines $y = f(x)$ and $y = g(x)$ shown in firebrick. Key points include the baricenter (orange X), geometric center (cyan X), unstable points (red X), an example of a return trajectory from the initial condition $(X_0, Y_0)$ in green, and the control point (dark green X). The roots of the the cubic polynomial Eq.~\eqref{eq:cubic_standard_form} are shwon in purple. \\
    b) Discriminant of the cubic nullcline: The discriminant $\Delta$ of the cubic equation as a function of the initial condition $Y_0$ for different values of $\epsilon \in \{ 0.1, 0.2, 0.3, 0.4, 0.5\}$. \\
    c) Approximate estimation of return times obtained by solving Eq.~\eqref{eq:return_time_roots}, displayed as a heatmap with the nullcline $y = f(x)$ overlaid. \\
    d) Numerical results for the return times obtained from extensive simulations for $\epsilon = 0.3$, displayed as a heatmap. The nullcline $dx/dt = 0$ is shown in red.}
    \label{fig:return_times2}
\end{figure}

\section{Discussion}

We have investigated pulse feedback control of synchronization in a FitzHugh-Nagumo neural population model. We have proved, using the propagation of chaos property, that the neural population equations can be reduced to a single, effective equation in the thermodynamic limit (large number of neurons), from which one can determine special points in phase space
to be subsequently used for pulse control. These points are the centroid, the mean-field limit cycle for $\epsilon = 0$, and the unstable point.

On the basis of numerical simulations, we can argue in favor of a control procedure that leads the system towards the centroid. In the case of bi-variate electrophysiological recordings of the brain (knowledge of $X$ and $Y$, a case that represents a realistic scenario in neuroscience, as argued in the introduction), the centroid can be easily calculated from data, after one cycle of the oscillatory process. In applications, this time may be very short, so the control procedure can be started soon after the establishment of an oscillatory process, such as an epileptic seizure or the onset of tremor in Parkinson's disease. 

In the previous literature, the mean-field cycle at $\epsilon = 0$ was used as a target in the control paradigm \cite{rosenblum2020controlling}. The idea in that work is that pulses are given at specific phases, that is, times on the limit cycle, such as to push the system (with pulses) as close as possible to the mean-field cycle at $\epsilon = 0$. Here we argue that the centroid position for small values of $\epsilon$ is very close to the centroid at $\epsilon = 0$, and it lies within the mean-field cycle at $\epsilon = 0$ (Fig.~\ref{fig:FitzNagumo}). Therefore, in practical terms, using the centroid position is almost equivalent to using the mean-field cycle at $\epsilon = 0$.  

Furthermore, the centroid coordinates practically do not change with changes in the coupling constant ($\epsilon$) (see Fig.~\ref{fig:FitzNagumo}). This is important because in neural tissue it would be expected that the self-coupling of an oscillatory network would change slightly over time. The robustness of the centroid to changes in coupling is, therefore, an advantageous property in terms of control. 

Finally, the most important property for the use of the centroid as a target point for pulse control is that it lies within the region of maximal values of return times (see Fig.~\ref{fig:return_times}). Therefore, if the system is pushed towards the centroid, it takes a relatively long time for it to return to the limit cycle, compared to other points inside the limit cycle. Why not use other points on the $x$-nullcine, which was found to coincide with the region of largest return times? The reason is that the $x$-nullcine is not known, as the equations of motion of the underlying oscillating system are not known (whereas the centroid can be calculated from data).

The above results may appear counterintuitive in that one may think that the unstable equilibrium point of system \eqref{eq:FitzHugh-Nagumo_Mean_Field} (where the velocity field is zero) is the best point to achieve long-lasting desynchronization.  
On the contrary, our analysis supports that the centroid is a better point for all considered values of $\epsilon$. 
Indeed, the unstable point (in red in each of the panels in Fig.~\ref{fig:return_times}) is both out of the region of long return times and is also not known, again, because the equations of motion are not known. 

In the current article, we do not inquire into the technicalities associated with the practical application of the current procedure in neural systems. Those factors were already extensively discussed in the literature (see, for example \cite{rosin2011closed, rosenblum2020controlling}), and we do not need to repeat the complete feedback control procedure, that we take to be identical to the one detailed in \cite{rosenblum2020controlling}, except for the choice of the control point target (in our case, the centroid is easily calculated because our system is bi-variate). 

In summary, our work provides a principled implementable pulse-feedback strategy for desynchronizing oscillatory neural populations in the FitzHugh--Nagumo mean-field framework: by leveraging the thermodynamic-limit reduction and the geometry of return times, we identify the centroid as a robust data-accessible control target that yields consistently long excursions away from the synchronous cycle across coupling strengths. These results clarify why pushing trajectories toward the centroid outperforms targeting the unstable equilibrium or phase-specific points on the $\epsilon=0$ mean-field cycle, and they support centroid-based feedback as a natural candidate for real-time control in settings where only bi-variate recordings are available.\\

\textbf{Code Availability}
The code used to generate the results in this study is publicly available at: \href{https://github.com/MarcoZenari/Pulse-Desynchronization-Neural-Populations}{GitHub repository}.

\textbf{Acknowledgments}
S. A. and S. S. acknowledge the funding provided by the Lincoln project of the INFN 

\bibliography{bibliography}

\clearpage
\appendix

\begin{appendices}

\section{Propagation of chaos}\label{app}

In this Appendix we prove that system \eqref{eq:FitzHugh-Nagumo} has the \textit{propagation of chaos property.}
The proof is an adaptation of the ideas in \cite{touboul12, nodea}.

\begin{theorem}[Propagation of chaos]
    For any $N\geq 1$, let $(I_i)_{i=1}^N$ be independent standard Gaussian random variables and $\{(x_{i,0}, y_{i,0})\}_{i=1}^N$ be i.i.d. with finite second moment and independent of $(I_i)_{i=1}^N$. Consider the particle system 

    \begin{equation*}
    \left\{
    \begin{aligned}
    \frac{d x_i}{dt} & = \xi x_i + \delta x_i^3 + \nu y_i + \epsilon X + \mu + \sigma I_i, \\
    \frac{d y_i}{dt} & = \alpha ( x_i + \beta y_i + \gamma ),\\
    x_i ( 0) & = x_{i,0}, \\
    y_i(0)  &= y_{i,0},
    \end{aligned}
    \qquad i=1,\ldots,N,
    \right.
    \end{equation*}
    and $N$ independent copies $(\hat x_i, \hat y_i)_{i=1}^N$ of the solution to system \eqref{eq:FitzHugh-Nagumo_Mean_Field}, that is,  
    \begin{equation*}
    \left\{
    \begin{aligned}
    \frac{d\hat{x}_i}{dt} & = \xi \hat{x}_i + \delta \hat{x}_i^3 + \nu \hat{y}_i + \epsilon \mathbb{E}[\hat{x}_i] + \mu + \sigma I_i, \\
    \frac{d\hat{y}_i}{dt} & = \alpha ( \hat{x}_i + \beta \hat{y}_i + \gamma ),
    \\
    \hat x_i(0) & = x_{i,0}, \\
    \hat y_i(0) & = y_{i,0},
    \end{aligned}
    \qquad i=1,\ldots,N .
    \right.
    \end{equation*}
    Then, for any $T>0$ and $i=1, \ldots, N$,
    \begin{equation}\label{ss}
        \begin{split}
            \lim_{N\to +\infty} \mathbb{E}\left[\sup_{t\in [0,T]}|x_i(t) - \hat{x}_i(t)| + \sup_{t\in [0,T]}|y_i(t) - \hat{y}_i(t)| \right] = 0.
        \end{split}
    \end{equation}
    \end{theorem}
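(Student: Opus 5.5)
The plan is a synchronous coupling (Sznitman-type) argument. The key structural facts are that the nonlinearity $\delta x^3$ with $\delta=-1/3<0$ is one-sided Lipschitz (dissipative), and that the only interaction is linear through the mean. Throughout, the particle system and the $N$ copies of \eqref{eq:FitzHugh-Nagumo_Mean_Field} are driven by the same currents $I_i$ and the same initial data.

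\emph{Step 1 (well-posedness and a priori bounds).} I would first establish global existence and uniqueness for both systems, with uniform moment bounds. Multiplying the $x$-equation by $x$ and using $\delta x^4\le 0$ gives
\[
\frac{d}{dt}\frac{|x|^2+|y|^2}{2}\le C\,(1+|x|^2+|y|^2+|m(t)|^2+\sigma^2 I^2),
\]
where $m$ is either $X$ or $\mathbb E[\hat x]$. Gronwall then yields $\sup_{t\le T}\mathbb E[\hat x^2+\hat y^2]<\infty$, and similarly $\mathbb E[x_i^2+y_i^2]$ bounded uniformly in $N$.

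For the McKean--Vlasov equation, $m(t)=\mathbb E[\hat x(t)]$ is a deterministic continuous function. I would treat it by a fixed-point argument on $C([0,T])$: given $m$, solve the ODE pathwise, then set $\Phi(m)(t)=\mathbb E[\hat x^m(t)]$. Using the one-sided Lipschitz estimate in Step 2 below, $\Phi$ is a contraction for small $T$ and hence, iterating, on all of $[0,T]$. Since the map (initial data, $I_i$) $\mapsto$ trajectory is measurable and the inputs are i.i.d., the copies $(\hat x_i,\hat y_i)$ are i.i.d.

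\emph{Step 2 (difference estimate).} Set $u_i=x_i-\hat x_i$ and $v_i=y_i-\hat y_i$. Then
\[
\dot u_i=\xi u_i+\delta(x_i^3-\hat x_i^3)+\nu v_i+\epsilon\bigl(X-\mathbb E[\hat x_1]\bigr),\qquad \dot v_i=\alpha(u_i+\beta v_i).
\]
Since $(x^3-z^3)(x-z)\ge0$ and $\delta<0$, the cubic term contributes nonpositively to $\frac{d}{dt}(u_i^2+v_i^2)$. This gives
\[
\frac{d}{dt}(u_i^2+v_i^2)\le C(u_i^2+v_i^2)+C\bigl|X-\mathbb E[\hat x_1]\bigr|^2 .
\]
I would then split
\[
X-\mathbb E[\hat x_1]=\frac1N\sum_j u_j+\Bigl(\frac1N\sum_j\hat x_j-\mathbb E[\hat x_1]\Bigr).
\]
The first term is bounded by $\frac1N\sum_j u_j^2$ via Cauchy--Schwarz. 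For the second, independence gives
\[
\mathbb E\Bigl|\frac1N\sum_j\hat x_j(t)-\mathbb E\hat x_1(t)\Bigr|^2=\frac{\mathrm{Var}(\hat x_1(t))}{N}\le \frac{C_T}{N}
\]
by Step 1. Averaging over $i$ and using exchangeability, set $\phi(t)=\mathbb E[\sup_{s\le t}(u_1^2+v_1^2)]$. To keep the supremum inside the expectation, I integrate the differential inequality first and then take the supremum. The fluctuation term then appears as $\int_0^t \mathbb E|\cdot|^2\,ds\le C_T t/N$, which avoids needing a maximal inequality. This gives $\phi(t)\le C\int_0^t\phi(s)\,ds+C_T/N$, and since $\phi(0)=0$, Gronwall yields $\phi(T)\le C_T' /N$.

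\emph{Step 3 (conclusion).} Jensen's inequality gives $\mathbb E[\sup|u_i|+\sup|v_i|]\le 2\sqrt{\phi(T)}=O(N^{-1/2})\to0$, which is \eqref{ss}.

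\emph{Main obstacle.} The difficulty is the non-globally-Lipschitz cubic term, which blocks the standard Sznitman argument. The sign $\delta<0$ resolves it in the difference estimate, but it also has to be used carefully twice more: in the McKean--Vlasov fixed point, to get contraction without a Lipschitz constant for the drift, and in the a priori second-moment bound for the fluctuation variance. If $\delta$ were positive, I would need localization via stopping times, and blow-up would be possible.
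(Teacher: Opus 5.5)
Your argument is correct, and its skeleton matches the paper's. Both use synchronous coupling through the same $I_i$ and initial data, and both use the sign $\delta<0$ to neutralize the cubic term. Both split $X-\mathbb{E}[\hat x]$ into $\frac1N\sum_j(x_j-\hat x_j)$ plus an i.i.d.\ fluctuation, then conclude by exchangeability and Gronwall with rate $N^{-1/2}$. The technical route differs, though.

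The paper works pathwise in $L^1$. It factors $x_i^3-\hat x_i^3=(x_i-\hat x_i)(x_i^2+x_i\hat x_i+\hat x_i^2)$ and regards $x_i-\hat x_i$ as the solution of a scalar linear ODE. The coefficient is $g(s)=\xi+\delta(x_i^2+x_i\hat x_i+\hat x_i^2)\le\xi$, and the forcing is $\nu(y_i-\hat y_i)+\epsilon(X-\mathbb{E}[\hat x_i])$. The integrating factor in the variation-of-constants formula is then bounded by $e^{\xi T}$, and the fluctuation is bounded in $L^1$ by $C/\sqrt N$ directly. You instead use the monotonicity $(x^3-z^3)(x-z)\ge 0$ in an energy estimate for $u_i^2+v_i^2$. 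This gives $\mathbb{E}\sup_{t\le T}(u_i^2+v_i^2)=O(1/N)$, and you pass to $L^1$ by Jensen.

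The paper's trick is shorter and avoids squaring, but it relies on the nonlinearity being a scalar function of $x$ that factors through $x_i-\hat x_i$. Your one-sided Lipschitz estimate is the standard Sznitman-type argument. It carries over to vector-valued dissipative drifts and to SDE versions via It\^o's formula, and it yields the stronger mean-square convergence.

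Your Step 1 also supplies what the paper leaves implicit. First, well-posedness of the McKean--Vlasov system, which the paper simply assumes. Second, second-moment bounds uniform on $[0,T]$. These bounds are exactly what justifies $\mathbb{E}\bigl|\frac1N\sum_j\hat x_j(s)-\mathbb{E}[\hat x_1(s)]\bigr|\le C/\sqrt N$. The paper attributes that bound to the central limit theorem, but it really follows from independence and Cauchy--Schwarz once $\mathrm{Var}(\hat x_1(s))$ is bounded.

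One small point to record: Gronwall requires $\phi(t)<\infty$, and Step 1 as written bounds $\sup_t\mathbb{E}[\cdot]$ rather than $\mathbb{E}[\sup_t\cdot]$. Either of two fixes closes this:
\begin{itemize}
\item apply the same integrate-then-take-sup trick to the energy inequality, which gives $\mathbb{E}\sup_{t\le T}(x_1^2+y_1^2+\hat x_1^2+\hat y_1^2)<\infty$;
\item or apply Gronwall pathwise to $\frac1N\sum_i\sup_{s\le t}(u_i^2+v_i^2)$ before taking expectations.
\end{itemize}
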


\begin{proof}

Set 
\[
\Delta(t) = \sup_{s\in [0,t]}|x_i(s) - \hat{x}_i(s)| + \sup_{s\in [0,t]}|y_i(s) - \hat{y}_i(s)| .
\] 
It suffices to prove that there exists a positive constant $C$ (independent of $N$ but possibly depending on $T$) such that  
    \begin{equation*}
        \begin{split}
            \mathbb{E}\left[\Delta(T) \right] \leq C \int_0^T \mathbb{E}\left[ \Delta(t)\right]  dt + \frac{{C}}{\sqrt{N}},
        \end{split}
    \end{equation*}
to conclude, by the Gronwall's lemma, 
\begin{equation}\label{eq:PoC_final_statement}
    \begin{split}
        \mathbb{E}\left[ \Delta(T)\right] \leq \frac{C}{\sqrt{N}}. 
    \end{split}
\end{equation}
The result \eqref{ss} follows by taking the limit $N\to +\infty$ in \eqref{eq:PoC_final_statement}.

\noindent
To start with, we consider the difference 
\[
    y_i(t) - \hat y_i(t) = \alpha \int_0^t (x_i(s) - \hat x_i(s)) ds  
    + \alpha \beta \int_0^t (y_i(s) - \hat y_i(s)) ds,
\]
where we used $y_i(0) = \hat y_i(0)$. 
Taking the absolute value and the supremum over $t \in [0,T]$ gives
\begin{equation}\label{eq:PoC_y_step1}
\sup_{t \in [0,T]} |y_i(t) - \hat y_i(t)|
\le C_1 \int_0^T \Delta(t) dt .
\end{equation}

\noindent 
For the difference $x_i(t) - \hat x_i(t)$, we use the identity $a^3 - b^3 = (a-b) (a^2 + ab + b^2)$ to write 
\begin{equation}\label{eq:diff_x_PoC}
    \begin{split}
        x_i(t) - \hat{x}_i(t) & = \int_0^t (x_i(s) - \hat{x}_i(s)) \left( \xi + \delta (x^2_i(s) + x_i(s) \hat{x}_i(s) + \hat{x}^2_i(s) ) \right) ds \\
        & + \int_0^t \left[\nu (y_i(s) - \hat{y}_i(s)) + \epsilon (X(s) - \mathbb{E}[\hat{x}_i(s)] ) \right] ds ,
    \end{split}
\end{equation}
where again we used $x_i(0) = \hat x_i(0)$. 

\noindent Setting $\phi(t) =x_i(t) - \hat{x}_i(t) $, $g(s) = \xi + \delta (x^2_i(s) + x_i(s) \hat{x}_i(s) + \hat{x}^2_i(s) ) $ and $f(s) = \nu (y_i(s) - \hat{y}_i(s)) + \epsilon (X(s) - \mathbb{E}[\hat{x}_i(s)] )$, 
Eq.~\eqref{eq:diff_x_PoC} 
is of the form $\phi(t) = \int_0^t \phi(s) g(s) ds + \int_0^t f(s) ds$, which has solution  $\phi(t) = \phi(0) + \int_0^t f(s) e^{\int_s^t g(r) dr}ds$.

\noindent Noticing that $\phi(0)= 0$ and 
taking the absolute value, we obtain 
\begin{equation*}
    \begin{split}
        | x_i(t) - \hat{x}_i(t) | & \leq  \int_0^t \left| \nu (y_i(s) - \hat{y}_i(s)) + \epsilon (X(s) - \mathbb{E}[\hat{x}_i(s)] ) \right| e^{\int_s^t \left(\xi + \delta (x^2_i(r) + x_i(r) \hat{x}_i(r) + \hat{x}^2_i(r) )\right) dr } .
    \end{split}
\end{equation*}
Now, since $a^2 + ab + b^2 \geq 0$ and $\delta<0$, we have the inequality $e^{\int_s^t \left(\xi + \delta (x^2_i(r) + x_i(r) \hat{x}_i(r) + \hat{x}^2_i(r) )\right) dr }\leq  e^{\xi T}$ for all $0< s < t<T$. 
Overall, 
\begin{equation}\label{eq:PoC_x_step1}
    \begin{split}
        \sup_{t\in[0,T]}| x_i(t) - \hat{x}_i(t) | & \leq e^{\xi T} \left( \int_0^T |\nu| |y_i(s)-\hat{y}_i(s)| ds + \int_0^T \epsilon | X(s) - \mathbb{E}[\hat{x}_i(s)] | ds \right)\\
        & \leq C_2 \left( \int_0^T \sup_{r\in[0,s]} |y_i(r)-\hat{y}_i(r)| ds + \int_0^T  | X(s) - \mathbb{E}[\hat{x}_i(s)] | ds\right) .
    \end{split}
\end{equation}

\noindent We next observe that for the last summand in \eqref{eq:PoC_x_step1} it holds 
\begin{equation*}
    \begin{split}
        | X(s) - \mathbb{E}[\hat{x}_i(s)] | & \leq \left| \frac{1}{N}\sum_{j=1}^N x_j(s) - \frac{1}{N}\sum_{j=1}^N \hat{x}_j(s)  \right| + \left| \frac{1}{N}\sum_{j=1}^N \hat{x}_j(s) - \mathbb{E}[\hat{x}_i(s)] \right| \\
        & \leq \frac1N \sum_{j=1}^N | x_j(s) - \hat{x}_j(s)| + \left| \frac{1}{N}\sum_{j=1}^N \hat{x}_j(s) - \mathbb{E}[\hat{x}_i(s)] \right| .
    \end{split}
\end{equation*}
Overall, inserting this last inequality  back into \eqref{eq:PoC_x_step1}, we have 
\begin{equation}\label{eq:PoC_x_step2}
    \begin{split}
        \sup_{t\in[0,T]}| x_i(t) - \hat{x}_i(t) | & \leq C_3 \left( \int_0^T \sup_{r\in[0,s]} |y_i(r)-\hat{y}_i(r)| ds + \frac1N \sum_{j=1}^N \int_0^T \sup_{r\in[0,s]} | x_j(r) - \hat{x}_j(r)| ds \right.\\
        & \left. +  \int_0^T \left| \frac{1}{N}\sum_{j=1}^N \hat{x}_j(s) - \mathbb{E}[\hat{x}_i(s)] \right| ds\right) .
    \end{split}
\end{equation} 

\noindent 
To conclude: by symmetry,  $\mathbb{E}\left[\frac{1}{N}\sum_j | x_j(s) - \hat{x}_j(s)|\right] = \mathbb{E}[| x_j(s) - \hat{x}_j(s)|]$ for any $j$, and, since $(\hat x_j(s))_{j=1}^N$ are i.i.d.,  $\mathbb E\left[ \left| \frac{1}{N}\sum_{j=1}^N \hat{x}_j(s) - \mathbb{E}[\hat{x}_i(s)] \right| \right] \leq C_4/\sqrt{N}$ by the central limit theorem.

\noindent Summing \eqref{eq:PoC_y_step1} and \eqref{eq:PoC_x_step2} 
and taking the expectation on both sides, we obtain \eqref{eq:PoC_final_statement}. 

\end{proof}

\end{appendices}

\end{document}